
\documentclass[journal]{ieeeconf}  
\usepackage[utf8]{inputenc}
\usepackage{tikz}
\usetikzlibrary{arrows}
 \usetikzlibrary{plotmarks}
 \usetikzlibrary{fit}
 \usetikzlibrary{calc}
 \usetikzlibrary{shapes,positioning}
\usepackage{hyperref}
\usepackage{amsmath,amsfonts}

\usepackage{caption}
\usepackage{subcaption}
\captionsetup{font=footnotesize}

\newtheorem{theorem}{Theorem}
\newtheorem{corollary}{Corollary}[theorem]

\newtheorem{remark}{Remark}
\newtheorem{example}{Example}
\newtheorem{assumption}{Assumption}

\newtheorem{definition}{Definition}

\usepackage{amsmath, amssymb}
\usepackage{amsfonts}
\usepackage{graphicx}
\usepackage{enumerate}
\usepackage{ifthen}
\usepackage{float}
\usepackage{cite}
\usepackage{fancyhdr}
\usepackage{pgfplots}
\pgfplotsset{compat=newest}
\usetikzlibrary{patterns}
\usetikzlibrary{decorations.text}
\usepgfplotslibrary{fillbetween}

\usepackage{multicol}
\usepackage{xcolor}

\newboolean{showcomments}
\setboolean{showcomments}{true}

\newcommand{\todo}[1]{  \ifthenelse{\boolean{showcomments}}
{\textcolor{green}{TO DO:  #1}}{}}
\newcommand{\lorenzo}[1]{\ifthenelse{\boolean{showcomments}}
{\textcolor{yellow}{(Lorenzo says: #1)}}{}}
\newcommand{\alain}[1]{\ifthenelse{\boolean{showcomments}}
{\textcolor{red}{(Alain says: #1)}}{}}
\newcommand{\emma}[1]{\ifthenelse{\boolean{showcomments}}
{\textcolor{blue}{(Emma says: #1)}}{}}
\usepackage{blindtext,subcaption}

\IEEEoverridecommandlockouts                              

\overrideIEEEmargins       

\renewcommand{\H}{\mathcal{H}}
\newcommand{\A}{\mathcal{A}}
\newcommand{\Y}{\mathcal{Y}}



\setlength{\belowcaptionskip}{-5pt}

\title{\LARGE \bf
Population games on dynamic community networks}

\author{Alain Govaert, Lorenzo Zino and Emma Tegling
\thanks{A. Govaert and E. Tegling are with the Department of Automatic Control, Lund University,
        , SE-221 00  Lund, Sweden
        (email: {\tt\small \{alain.govaert,emma.tegling\}@control.lth.se}). L. Zino is with Faculty of Science and Engineering, University of Groningen, 9747 AG Groningen, The Netherlands
        (email: {\tt\small lorenzo.zino@rug.nl}). This work was partially supported by the Wallenberg AI, Autonomous Systems and Software Program (WASP) funded by the Knut and Alice Wallenberg Foundation.}%
}

\begin{document}

\maketitle
\thispagestyle{empty}
\pagestyle{empty}

\begin{abstract}
In this letter, we deal with evolutionary game-theoretic learning processes for population games on networks with dynamically evolving communities. Specifically, we propose a novel mathematical framework in which a deterministic, continuous-time replicator equation on a community network is coupled with a closed dynamic flow process between communities that is governed by an environmental feedback mechanism, resulting in co-evolutionary dynamics. Through a rigorous analysis of the system of differential equations obtained, we characterize the equilibria of the coupled dynamical system. Moreover, for a class of population games ---matrix games--- a Lyapunov argument is employed to establish an evolutionary folk theorem that guarantees convergence to the evolutionary stable states of the game. Numerical simulations are provided to illustrate and corroborate our findings.

\end{abstract}



\section{Introduction}

Evolutionary game theory has rapidly emerged as a powerful mathematical paradigm to model how players in a population game revise their actions to improve their payoff by means of learning mechanisms~\cite{ Weibull1995, Hofbauer2009, Sandholm2010, Quijano2017}. Evolutionary game theory has been successfully adopted to capture many real-world phenomena, including the evolution of behaviors in social, economic, and biological systems~\cite{Traulsen2010,Berg2015}.
In such scenarios, full information on the possible actions and corresponding payoffs is often lacking and players must learn about other actions and payoffs by means of (pairwise) interactions and imitation mechanisms. 
A particular, but important, case is the replicator equation, for which global stability have been established for many classes of games, including stable games~\cite{Hofbauer2009,Fox2013}, potential games~\cite{Sandholm2010}, and matrix games~\cite{Bomze2002,Cressman2014,Riehl2018}. More general forms of imitation dynamics have been considered in~\cite{Hofbauer2009,Barreiro-Gomez2018}, and global convergence results have been established for specific classes of games, including games with strategic substitutes and  strategic complements~\cite{Cimini2017}, some public goods games~\cite{9376279}, and potential games~\cite{cdc2017}.

The literature on evolutionary game theory usually relies on the assumption that individuals interact on a homogeneous time-invariant all-to-all communication structure. However, this assumption is quite simplistic in many real-world scenarios~\cite{Easley2010}. To address this limitation, in particular for the class of learning mechanisms regulated by pairwise interactions and imitation dynamics, some recent efforts toward incorporating a mesoscopic network structure into learning protocols have been made. In these frameworks, players are divided into communities, which determine their possible interactions with other players~\cite{Hofbauer2009,Sandholm2010,Barreiro-Gomez2016,Como2021}. 

In the aforementioned works, 
it is assumed that the communities are fixed a priori~\cite{Hofbauer2009,Sandholm2010,Como2021} or determined by the individuals' actions~\cite{Barreiro-Gomez2016}. 
This relies on a time-scale separation assumption, under which a dynamical co-evolution of the communities at the same time-scale as the learning process is neglected. However, co-evolution is present in many real-world applications, such as ecological and biological systems, in which the presence of environmental feedback can impact the population density in different spatially distributed communities. Another example is socio-economic systems, where seasonality and other cyclic changes may cause migration between geographic regions.

Here, we address this gap concerning co-evolution by proposing a novel dynamic coupling of two mechanisms: an evolutionary dynamics on community networks with closed dynamic flow process as in~\cite{kelly1979reversibility}, and an environmental feedback, which has previously been modeled for evolutionary game frameworks \emph{without} community structure in~\cite{tilman2020evolutionary}. 
We model a scenario where individuals of a community can move to other communities in response to environmental changes. Hence, we augment the system of ordinary differential equations that characterizes the replicator equation on community networks from~\cite{Como2021} 
with a set of ordinary differential equations that describe the evolution of the community densities. 

Our novel modeling approach lays the foundation for a theory to study realistic learning dynamics on networks with dynamic communities and characterize their asymptotic behavior at both the population and community level, without relying on any time-scale separation arguments. Using the proposed framework, we characterize a number of solutions of the dynamical system. For example, we show that if the  proportion of players playing each action converges, then the game reaches a Nash equilibrium, but the community densities may still oscillate.
Moreover, for the special case of binary action sets, we establish an evolutionary folk theorem, proving (almost) global convergence to the evolutionary stable states. These preliminary results pave the way for several extensions, toward further reducing the gap between theoretical and empirical research in population dynamics.

The rest of the letter is organized as follows. In Section~\ref{sec:model}, we formalize the model. In Section~\ref{sec:results}, we present our main convergence results. In Section \ref{section: case study}, we present two case studies. Section~\ref{sec:conclusion} concludes the letter outlining potential avenues for future research. 

\textit{Notation: }The sets of real and non-negative real numbers are denoted by $\mathbb{R}$ and $\mathbb{R}_+$, respectively. For finite sets $\mathcal{A}$ and~$\mathcal{B}$, $\mathbb{R}^{\A\times\mathcal{B}}$ denotes the set of real matrices whose entries are indexed by the elements of $\A\times\mathcal{B}$. 
The transpose of a matrix~$\boldsymbol{x}$ is denoted by $\boldsymbol{x}^\top$. 
The $j$-th column (row) of a matrix~$\boldsymbol{x}$ is denoted by~$\boldsymbol{x}_{j}$ ($\boldsymbol{x}_{j'}$) and the $ij$-th element by $x_{ij}$.
The $i$-th element of a vector $\boldsymbol{y}$ is denoted by ${y}_i$ and the $2$-norm of the vector by $||\boldsymbol{y}||$
The vector of all ones is denoted by $\boldsymbol{1}$ and sign function is denoted by $\mathrm{sgn}$. 
For a non-negative matrix $W$ in $\mathbb{R}^{n\times n}$ the graph associated to~$W$ is defined as~$(\mathcal{N},\mathcal{E}_W)$, with node set~$\mathcal{N}:=\{1,2,\dots,n\}$ and edge set~$\mathcal{E}_W:=\{(i,j)\in\mathcal{N}\times\mathcal{N}: W_{ij}>0\}.$

\section{Model}\label{sec:model}

We consider a continuum of individuals structured into communities that interact with each other through frequency-dependent, instantaneous, random, pairwise encounters with varying strengths, both within and between the communities in the population.
In each pairwise encounter, individuals use an \emph{action} from a finite and common action set $\mathcal A$, which, together with the action of the opponent, results in a reward. 
Evolutionary dynamics describe how the frequencies of actions across the population change under the influence of the pairwise encounters. 
The novel aspect here is that individuals can move freely between the communities. The communities are connected by a dynamic flow network whose flow rates change in response to the frequencies of actions in the communities and, possibly, an exogenous process. 
Since the movement of individuals changes the rate at which pairwise encounters occur, a feedback process is established that describes the co-evolution of strategic interaction and migration flows at the community and population level.
We next provide formal definitions of the various concepts.

\subsection{Population game}

The \textit{population state} $\boldsymbol{y}$ is a vector in the unitary simplex over the action set \(\Y:=\{\boldsymbol{y}\in\mathbb{R}_{+}^\A:\boldsymbol{1}^\top\boldsymbol{y}=1\}.\)
The element $y_i$ of the population state $\boldsymbol{y}$ denotes the fraction of players in the population that use action $i$ in $\A$ ($i$-players). 
A population state $\boldsymbol{y}$ in $\Y$ is said to support action $i$ in $\A$ if a non-zero fraction of the population uses it. The set 
$\mathcal{S}_{\boldsymbol{y}}:=\{i\in\A: y_i>0\}$
is called the \emph{support} of $\boldsymbol{y}$.
Given a population state $\boldsymbol{y}$, expected rewards $r_i(\boldsymbol{y})$ are determined by the reward functions $r_i:\mathcal{Y}\rightarrow\mathbb{R}$, $i$ in $\mathcal A$.
We refer to the pair $(\Y,r)$ as a \emph{population game}.

\subsection{Community network}

Individuals are structured into a finite set $\H$ of \emph{communities}. We refer to the proportion of the population in community $h$ in $\H$ as the \emph{community density} and denote it by $\eta_h$.
The fraction of $i$-players in community $h$ is denoted by $x_{ih}$ and makes up the elements of the \textit{system state} matrix $\boldsymbol{x}$ in $\mathbb{R}_+^{\A\times \H}$. 
The columns of the system state matrix are referred to as the \textit{community state} vectors $\boldsymbol{x}_h$ in $\mathbb{R}_+^{\A}$ for $h$ in $\H$.
Similarly, the support of a community state is
\( \mathcal{S}_{\boldsymbol{x}_h}:=\{i\in\A: x_{ih}>0\}\), with \(\cup_{h\in\H}\mathcal{S}_{\boldsymbol{x}_h}=\mathcal{S}_{\boldsymbol{y}}.\)
The density of the population is assumed constant and given by
\begin{equation}\label{eq: population density}
    \boldsymbol{\eta1}=\boldsymbol{1}^\top \boldsymbol{x1}=\boldsymbol{1}^\top\boldsymbol{y}=1,
\end{equation}
which results in the set of admissible system states 
\(\mathcal{X}=\{\boldsymbol{x}\in\mathbb{R}^{\A\times\H}:\eqref{eq: population density}\}.\)
The strength of interactions between communities is determined by a constant non-negative matrix $W$ in $\mathbb{R}_+^{\H\times\H}$. 
Together with the fraction of $i$-players, this determines the rate
$x_{ih}W_{hk}x_{jk}\geq 0$, for  $i,j$ in $\A$ and $h,k$ in $\H$,
at which $i$-players in community $h$ meet $j$-players in community $k$ in pairwise encounters.
We refer to the triplet $(\H,W,\boldsymbol{\eta})$ as a \emph{community network}.
Throughout the letter the following assumption is made.

\begin{assumption}\label{as: W}
$W$ is non-negative and irreducible with strictly positive diagonal. That is, the graph $(\H,\mathcal{E}_W)$ 
is connected and has self-loops.
\end{assumption}

\subsection{Evolutionary dynamics}

Although the results in Section~\ref{subsec: DB} can be generalized to a broader class of evolutionary imitation dynamics, here we focus on the replicator equation due to its prominence in evolutionary game theory~\cite{cressman2003evolutionary,Cressman2014} and control applications of population games~\cite{Quijano2017}. 
The replicator equation on a community network is a matrix-valued equation $\boldsymbol{f}(\boldsymbol{x})$ in $\mathbb{R}^{\H\times\H}_+$ whose elements
\begin{equation}\label{eq: replicator dynamic}
    \begin{aligned}
        f_{ih}(\boldsymbol{x})=\eta_h\sum_{k\in\mathcal{H}}x_{ik}W_{hk}r_i(\boldsymbol{y})-x_{ih}\sum_{j\in\mathcal{A}}\sum_{k\in\mathcal{H}}x_{jk}W_{hk}r_j(\boldsymbol{y})
    \end{aligned}
\end{equation}
describe how the proportion of $i$-players in community $h$ changes under the influence of selection (see,~\cite{Como2021}).

Two properties of this dynamic deserve some attention. First, if there is a community $h$ in $\H$ such that $\eta_h=1$, then~\eqref{eq: replicator dynamic} reduces to the more familiar form of the single population, single community replicator equation~\cite{Cressman2014}:
\[f_i(\boldsymbol{x})=x_i\Big(r_i(\boldsymbol{x})-\sum_{j\in\A}x_jr_j(\boldsymbol{x})\Big),\quad x_i=W_{hh}x_{ih}.\]
Second, if there exist at least two communities $h,k$ in $\H$ with $W_{hk}>0$ and $\eta_h>0$, then for any system state $\boldsymbol{x}$ in $\mathcal{X}$ such that $x_{ik}>x_{ih}=0$, the right-hand-side of~\eqref{eq: replicator dynamic} reduces to
\(\eta_h\sum_{k\in\mathcal{H}}x_{ik}W_{hk}r_i(\boldsymbol{y}).\) 
This corresponds to the dynamics at a system state in which a supported action of the population state is not supported by all community state vectors. The following assumption ensures the model remains well-defined for such cases.

\begin{assumption}\label{as: positive rewards}
Reward functions $r_i$ are positive-valued.
\end{assumption}

This assumption is born out of a technical necessity that is not confining. In fact, the restricted Nash equilibria of~\eqref{eq: replicator dynamic} are invariant to the addition of a constant to the reward functions. Hence, rewards functions can always be made positive without changing the set of equilibrium points.

\subsection{Dynamic flow process}

We assume individuals of a community have an intrinsic tendency for movement that is described by a constant non-negative matrix $\Lambda$ in $\mathbb{R}_+^{\H\times \H}$ with elements $\lambda_{hk}$.
Given a system state $\boldsymbol{x}$ in $\mathcal{X}$, these intrinsic tendencies are modulated by a non-negative environmental function
    ${\boldsymbol{\phi}}:\mathcal{X}\rightarrow\mathbb{R}^{\H\times\H}_+$.
The environmental response function may depend on a subset of community state vectors (as we shall discuss in Example~\ref{example one}) or on an exogenous variable. It may also be governed by positive system dynamics (see, Example~\ref{example 2} in the following).
As in the closed migration processes of~\cite[Chapter 2]{kelly1979reversibility}, we assume scaling is multiplicative such that the dynamic rate at which individuals move from community $h$ to community~$k$ is $\lambda_{hk}\phi_{hk}(\boldsymbol{x})$ in $\mathbb{R}_+$. 
The changes in community densities induced by these movements are described by the dynamic flow process:
\begin{equation}\label{eq: eta dynamic}
    \dot{\eta}_h=\underbrace{\sum_{k\in\mathcal{H}}\lambda_{kh}\phi_{kh}(\boldsymbol{x})\eta_k}_{\text{inflow}}-\underbrace{\eta_h\sum_{k\in\mathcal{H}}\lambda_{hk}\phi_{hk}(\boldsymbol{x})}_{\text{outflow}}.
\end{equation}
This preserves the population density~\eqref{eq: population density} because the system is closed. Moreover, non-negativity of the environmental function $\boldsymbol{\phi}$ and movement matrix $\Lambda$ ensures the solutions of~\eqref{eq: eta dynamic} remain well-defined densities in the unit interval.
\begin{remark}
For ecological population models, the existence of a degenerate equilibrium at which $\eta_h=0$ is unrealistic because empty habitats tend to become re-occupied~\cite{cressman2006migration}. To ensure this, $\mathcal{E}_\Lambda$ must be connected and the environmental response $\phi_{kh}(\boldsymbol{x})>0$, for all $\boldsymbol{x}$ in $\mathcal{X}$, if $\eta_h=0$.
\end{remark}
We illustrate dynamic flow processes through the following two examples.

\begin{example}[Rational community selection]\label{example one}
Let us associate to each community $h$ in $\mathcal{H}$ a payoff function $\pi_h(\eta_h)$ in $\mathbb{R}$ that determines the payoff that individuals obtain for being in community $h$ when its density is $\eta_h$. 
The set of communities with the highest payoff in the neighborhood of community $k$ depends on $\boldsymbol{\eta}$ and is denoted by 
\(\mathcal{N}_k(\boldsymbol{\eta}):=\{h\in\mathcal{H}: h\in\mathrm{argmax}_{z:(z,k)\in\mathcal{E}_{\Lambda}}\pi_z(\eta_z)\}.\)
Now, consider the environmental function 
\begin{equation}\label{eq: BR}
    \phi_{kh}(\boldsymbol{\eta})=\begin{cases}
    1/|\mathcal{N}_k(\boldsymbol{\eta})| \quad &\text{\ if\ } h\in\mathcal{N}_k(\boldsymbol{\eta})\,,\\
    0 \quad &\text{\ otherwise}.
    \end{cases}
\end{equation}
If the graph $(\H,\mathcal{E}_\Lambda)$ is complete with self-loops, this dynamic flow process corresponds to an unconstrained best response dynamic in which individuals move only to those communities for which they receive the maximum payoff. Indeed, this corresponds to the habitat selection games studied in~\cite{cressman2006migration, Cressman2014}, which are discussed further in Section~\ref{section: case study}.
On the other hand, if $(\H,\mathcal{E}_\Lambda)$ is not complete, individuals can move only to those neighboring communities with the highest local payoff. 
This results in a constrained best response process where the feasible actions of individuals in a community are determined by its local payoff performance, as in relative best response dynamics in network games~\cite{9029821}.
\end{example}

\begin{example}[Environmental out-migration]\label{example 2}
Density limiting effects are an important consideration in ecological and population models. 
Increased pressures on resources can lower reproductive rates~\cite{CRESSMAN2003519} and increase out-migration~\cite{isard1966methods,Masa}. 
The latter, for example, can be captured by considering a dynamic environmental response that is uniform in the \emph{outflows} of a community: 
\begin{align}\label{eq: phi dot}
\dot{\phi}_{hk}=(\phi_{hk}-m)\left(1-\frac{\eta_h}{\kappa_h}\right)\phi_{hk}\,,
\end{align}
where $m>0$ is a constant maximum environmental response and $\kappa_h>0$ the carrying capacity of community $h$.
Note that the non-negative solution $\phi_h(t)$ of the above differential equation exists and result in well-defined community densities. Moreover, if a community is overcrowded $\eta_h>\kappa_h$, then the out-migration increases.
This example may be generalized to account for dependency of the carrying capacity on the community state vector $x_h$, akin to density games~\cite{NOVAK201326}.
\end{example}

When the dynamic movement rates depend on the system state but are action independent, the proportion of actions in the outflows of a community are distributed according to the corresponding community state vector. 
Thus, when~\eqref{eq: replicator dynamic} is interconnected with~\eqref{eq: eta dynamic}, the closed-loop system state dynamic reads as
\begin{multline}\label{eq: closed}
    \dot{x}_{ih}={\sum_{k\in\mathcal{H}}\lambda_{kh}\phi_{kh}(\boldsymbol{x})x_{ik}}-{x_{ih}\sum_{k\in\mathcal{H}}\lambda_{hk}\phi_h(\boldsymbol{x})}\\+\eta_h\sum_{k\in\mathcal{H}}x_{ik}W_{hk}r_i(\boldsymbol{y})-x_{ih}\sum_{j\in\mathcal{A}}\sum_{k\in\mathcal{H}}x_{jk}W_{hk}r_j(\boldsymbol{y}).
\end{multline}

\begin{definition}
The combination of a population game, a community network and an environmental function, define the \emph{population game on a dynamic community network} as the tuple $\Gamma=(\mathcal{Y},r, \H,W,\Lambda,\boldsymbol{\phi})$.
\end{definition}
The model framework is illustrated in Figure~\ref{fig:networkillustration}.

\begin{figure}
     \centering
      \includegraphics[width = 0.35\textwidth]{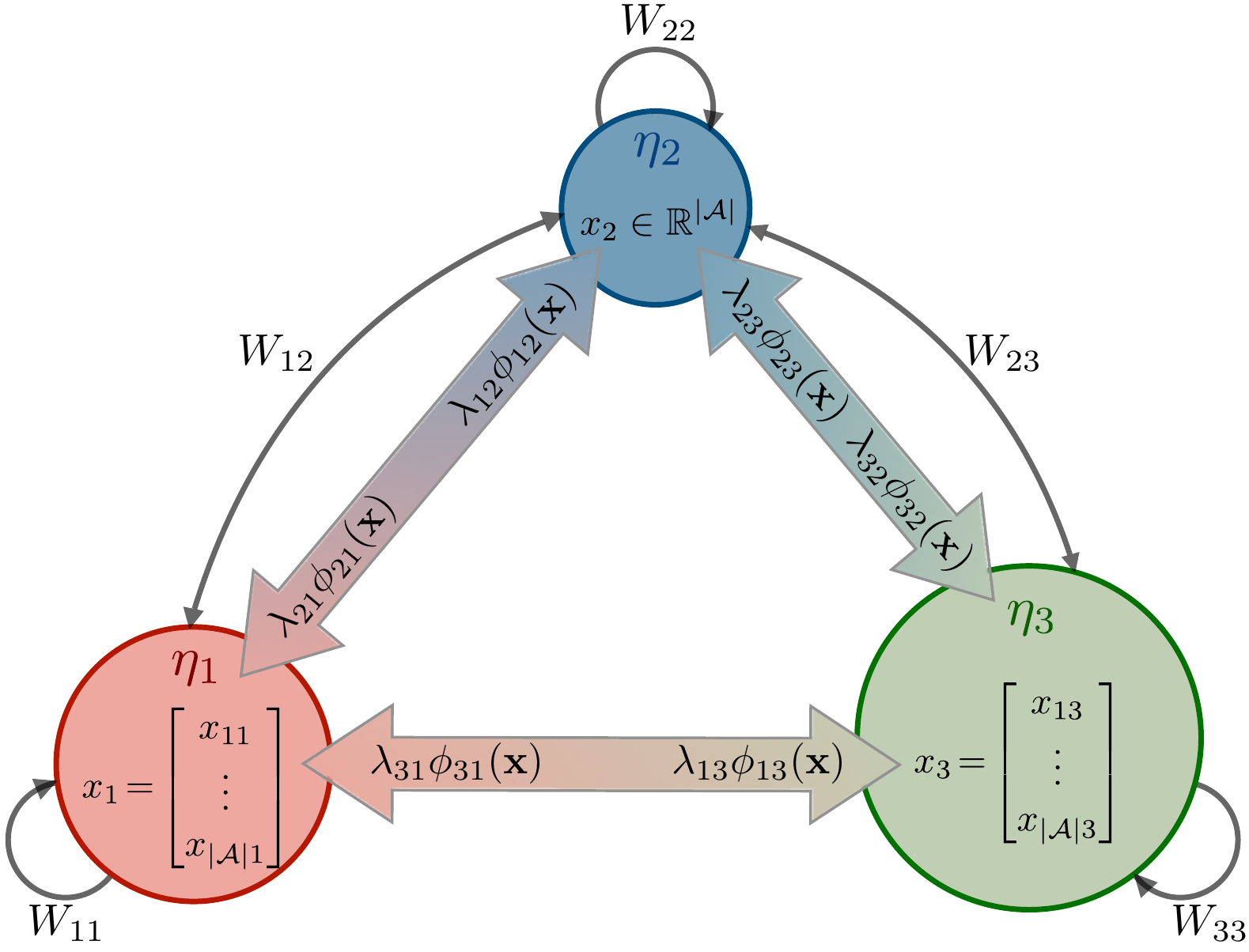}
     \caption{Illustration of the population game over a three-community network with dynamic densities~$\eta_h$.
     We depict a symmetric interaction matrix~$W$ (i.e., $W_{hk} = W_{kh}$), though Theorem~1 does not require such symmetry. The non-negative inter-community movement matrix $\Lambda$ here describes a complete graph as in Example~\ref{example one} that, combined with the environmental response function in Example~\ref{example one}, results in an unconstrained best-response process. In general, there are no connectivity requirements for $(\mathcal{H},\mathcal{E}_\Lambda)$.}
     \label{fig:networkillustration}
 \end{figure}

\section{Results}\label{sec:results}

\subsection{A dynamic system state-density balance}\label{subsec: DB}

This section characterizes the asymptotic relation between the population state, the community state and dynamic community densities of a connected community network. 
Here, no further restrictions are imposed on the size of the finite action set, the structure of the positive reward functions, or the non-negative environmental function.

\begin{theorem}\label{theorem: DD}
 Consider a population game on a dynamic community network $\Gamma$ that satisfies Assumptions~\ref{as: W}-\ref{as: positive rewards}.
Let $\boldsymbol{\eta}(t)$ and $\boldsymbol{x}(t)$ be the solutions of the dynamics~\eqref{eq: eta dynamic} and~\eqref{eq: closed}, respectively. If
	  \(\underset{t\rightarrow\infty}{\mathrm{lim}}\boldsymbol{x}(t)\boldsymbol{1}=\boldsymbol{y}^\ast,\)
	    then $\boldsymbol{y}^\ast$ is a restricted Nash equilibrium and, for all $h$ in $\H$ such that $\underset{t\rightarrow\infty}{\mathrm{lim\ inf}}\ \eta_h(t)>0$, it holds that
\begin{equation}\label{eq: dynamic balance}
    \underset{t\rightarrow\infty}{\mathrm{lim}}\frac{{x_{ih}}(t)}{\eta_h(t)}=y_i^\ast\,, \quad \forall i\in\A.
\end{equation}
In the degenerate cases $\eta_k(t)=0$, ${x_{ik}(t)}=0$ for all $i$ in $\A$.
\end{theorem}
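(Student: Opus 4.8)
My plan is to combine a compactness/invariance argument with a Lyapunov functional measuring the dispersion of the within-community action shares. The admissible set $\mathcal{X}$ is compact and positively invariant (the construction of~\eqref{eq: eta dynamic}--\eqref{eq: closed} keeps all $x_{ih}$ nonnegative and the population density fixed), so the trajectory has a nonempty, compact, connected, invariant $\omega$-limit set $\Omega$ (augmenting the state with $\boldsymbol{\phi}$ when it is itself dynamic, as in Example~\ref{example 2}, which remains bounded). By hypothesis $\boldsymbol{x}(t)\boldsymbol{1}\to\boldsymbol{y}^\ast$, so every point of $\Omega$ has row sums equal to $\boldsymbol{y}^\ast$; by invariance the row sums are \emph{constant} along any trajectory lying in $\Omega$, whence $\sum_{h\in\H}\dot{x}_{ih}\equiv 0$ there. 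The degenerate statement is immediate: since $\eta_k=\sum_{i\in\A}x_{ik}$ with $x_{ik}\ge 0$, having $\eta_k(t)=0$ forces $x_{ik}(t)=0$ for every $i$, so these communities are excluded from~\eqref{eq: dynamic balance}.

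Next I would pass to the within-community shares $z_{ih}:=x_{ih}/\eta_h$ (well-defined whenever $\eta_h>0$) and study the dispersion functional $\Phi(\boldsymbol{x}):=\sum_{i\in\A}\sum_{h\in\H}x_{ih}^2/\eta_h=\sum_{i,h}\eta_h z_{ih}^2$. Jensen's inequality gives $\Phi\ge\sum_{i}(\boldsymbol{x}\boldsymbol{1})_i^2$, with equality if and only if, for each action, the shares $z_{ih}$ agree across all communities of positive density, i.e.\ the network is at \emph{consensus}. Differentiating $\Phi$ along~\eqref{eq: closed}, and using that the selection terms conserve each community density ($\sum_i f_{ih}=0$), the migration terms collapse to the dissipation $-\sum_{h,k}\lambda_{kh}\phi_{kh}(\boldsymbol{x})\,\eta_k\lVert z_{\cdot h}-z_{\cdot k}\rVert^{2}\le 0$. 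Restricting to $\Omega$, where the population is frozen at $\boldsymbol{y}^\ast$, I would combine this with a control of the residual selection term (using $\sum_h\dot{x}_{ih}=0$ and the irreducibility of $W$) to obtain $\dot{\Phi}\le 0$ on $\Omega$; a LaSalle-type invariance argument then makes $\Phi$ constant on $\Omega$, forcing consensus. Because the $\eta_h$-weighted average of $z_{ih}$ equals $(\boldsymbol{x}\boldsymbol{1})_i=y_i^\ast$, the common value is $y_i^\ast$, which is exactly~\eqref{eq: dynamic balance} for every $h$ with $\liminf_t\eta_h(t)>0$.

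The restricted-Nash property then falls out of consensus: substituting $x_{ih}=y_i^\ast\eta_h$ into $\dot{y}_i=\sum_h f_{ih}$ collapses~\eqref{eq: replicator dynamic} to the single-population replicator $\dot{y}_i=y_i^\ast\,\Theta\,(r_i(\boldsymbol{y}^\ast)-\sum_{j\in\A}y_j^\ast r_j(\boldsymbol{y}^\ast))$ with $\Theta=\boldsymbol{\eta}^\top W\boldsymbol{\eta}>0$; since $\dot{y}_i=0$ on $\Omega$, every supported action earns the common reward $\sum_j y_j^\ast r_j(\boldsymbol{y}^\ast)$, i.e.\ $\boldsymbol{y}^\ast$ is a restricted Nash equilibrium. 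I expect the decisive obstacle to be the selection term in $\dot{\Phi}$: unlike migration, selection is not manifestly dissipative and, away from a rest point, actively separates the communities (its contribution is proportional to the reward gaps $r_i-\sum_j y_j r_j$). The heart of the proof is therefore to show that this forcing vanishes as $\boldsymbol{y}(t)\to\boldsymbol{y}^\ast$ while the irreducible coupling $W$ propagates agreement across the whole network, so that consensus emerges only \emph{asymptotically}, on $\Omega$, rather than as an invariant of the flow; turning this into a uniform estimate of the selection term on $\Omega$ is where the real work lies.
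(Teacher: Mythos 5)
Your proposal leaves its central step unproved, and you say so yourself: the control of the selection term in $\dot\Phi$ is ``where the real work lies,'' but that is precisely the content of the theorem. The difficulty is not merely technical. On the $\omega$-limit set the population state is frozen at $\boldsymbol{y}^\ast$, yet the selection contribution to $\dot\Phi$ is $2\sum_{i,h}(x_{ih}/\eta_h)f_{ih}(\boldsymbol{x})$, and $f_{ih}$ does \emph{not} vanish there: even after reward equality is established one gets $f_{ih}=r^\ast\sum_k W_{hk}\eta_h\eta_k(z_{ik}-z_{ih})$, a Laplacian-type forcing whose contribution to your dispersion functional is a quadratic form that is sign-definite only when $W$ is symmetric. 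Theorem~\ref{theorem: DD} makes no symmetry assumption (the caption of Fig.~\ref{fig:networkillustration} stresses this), so the functional $\Phi=\sum_{i,h}x_{ih}^2/\eta_h$ is the wrong certificate for the general case. There is also a circularity in your ordering: you derive the restricted-Nash property \emph{from} consensus, but to bound the selection term \emph{before} consensus you would need the reward gaps $r_i(\boldsymbol{y}^\ast)-r_j(\boldsymbol{y}^\ast)$ to vanish on the support, i.e., essentially the Nash property; note that $\dot{y}_i=0$ on $\Omega$ alone does not yield reward equality, because the effective encounter rates $\sum_{h,k}\eta_hW_{hk}x_{ik}$ do not factor through $y_i$ away from consensus.

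The paper avoids both problems by arguing statically rather than via dissipation. It first gets the restricted-Nash property from the identity \eqref{eq: same population equilibria} (migration cancels in the row sums, so the population-level rest points coincide with those of \eqref{eq: replicator dynamic}, characterized in \cite[Theorem 1]{Como2021}), and then rules out $f_{ih}\neq 0$ at a population equilibrium by a discrete maximum principle: a strict inequality $\sum_k Q_{hk}\alpha_k>\alpha_h$ with $\alpha_k=x_{ik}/\eta_k$ and $Q$ row-stochastic and irreducible is impossible at a maximizer of $\alpha$. That argument is insensitive to the asymmetry of $W$, which is exactly where your quadratic functional breaks down. Your compactness framing of the limit, the migration dissipation identity $-\sum_{h,k}\lambda_{kh}\phi_{kh}\eta_k\lVert z_{\cdot h}-z_{\cdot k}\rVert^2$, and the derivation of restricted Nash from consensus are all correct and could be salvaged, but to close the argument you would need to replace $\Phi$ by something like the spread $\max_h z_{ih}-\min_h z_{ih}$ --- at which point you have reconstructed the paper's maximum-principle proof.
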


\begin{proof} 
The degenerate cases are trivial since a community with zero density does not contain individuals.
We proceed to prove the remaining statements.
From Assumption~\ref{as: positive rewards} and \cite[Theorem 1]{Como2021} it follows that $f_{ih}({\boldsymbol{x}})$ in \eqref{eq: replicator dynamic} is zero for $\eta_h>0$ if and only if the following two conditions hold:
1) $r_i(\boldsymbol{y})=r_j(\boldsymbol{y})$ for all $i,j\in\mathcal{S}_{\boldsymbol{y}}$; and 2)
${x}_{ih}=y_i\eta_h$ for all $i\in\A$ and $h\in\H$. 
The first condition corresponds to the population state vector $\boldsymbol{y}$ being a restricted Nash equilibrium; the second refers to a balanced system state that holds trivially in the degenerate case $\eta_h=0$.
Since the dynamic flow process is closed, the sum of inflows equals the sum of outflows and
\begin{equation}\label{eq: same population equilibria}
    \sum_{h\in\H}f_{ih}(\boldsymbol{x})=\sum_{h\in\H}\dot{x}_{ih}\,, \quad \forall \boldsymbol{x}\in\mathcal{X}.
\end{equation}
That is,~\eqref{eq: replicator dynamic} and~\eqref{eq: closed} have the same set of equilibrium population state vectors. Hence, $\boldsymbol{y}^\ast$ must also be a restricted Nash equilibrium.
Next, we show that Condition 2 (which holds due to Assumption~\ref{as: positive rewards}) implies that for $\eta_h>0$,
\begin{equation}\label{eq: ratio}
    \frac{d}{dt}\frac{x_{ih}}{\eta_h}
   =\frac{1}{\eta_h^2}\sum_{k\in\mathcal{H}}\lambda_{kh}\phi_{kh}\left(x_{ik}\eta_h-x_{ih}\eta_k\right)+\frac{f_{ih}(\boldsymbol{x})}{\eta_h}=0.
\end{equation}
Clearly, Condition $2$ ensures that the first term in~\eqref{eq: ratio} is zero, while the second term is zero if Condition $1$ holds as well, and thus the equality holds. 

It remains to show that the limit $\boldsymbol{x}(t)\boldsymbol{1}=\boldsymbol{y}^\ast$ for $t\rightarrow\infty$, implies that $f_{ih}=0$ all $i$ in $\A$ and $h$ in $\H$.
Assume, for the sake of contradiction, that $\boldsymbol{x}(t)\boldsymbol{1}=\boldsymbol{y}^\ast$ and $f_{ih}\neq 0$. Then, Conditions 1 and 2 cannot both hold.
In particular, if Condition $1$ is violated then $\boldsymbol{y}^\ast$ is not a restricted Nash equilibrium. 
By \eqref{eq: same population equilibria} this is a contradiction. 
Suppose now, that Condition $1$ holds while Condition $2$ is violated. 
Because Condition 1 holds,
$\sum_{l\in\H}f_{il}=\dot{y}_i=0$ for all $i$ in $\A$. This implies that either $f_{il}(\boldsymbol{x})=0$ for all $i$ in $\A$ and $l$ in $\H$ (and thus Condition 2 holds) \emph{or} there exist $h,k$ in $\H$ such that $f_{ih}>0$ and $f_{ik}<0$. Then, at the population state equilibrium $\boldsymbol{y}^\ast$
\begin{equation}\label{eq: fih}
    f_{ih}(\boldsymbol{x})=\sum_{k\in\mathcal{H}}W_{hk}\left(x_{ik}\eta_h-x_{ih}\eta_k\right)r_i(\boldsymbol{{y}^\ast})>0.
\end{equation}
Dividing by the strictly positive term $r_i(\boldsymbol{y^\ast})\eta_h\sum_{k}W_{hk}\eta_k$,~\eqref{eq: fih} yields the inequality
\[\sum_{k\in\H}Q_{hk}\alpha_k>\alpha_h \text{\ with\ } Q_{hk}:=\frac{W_{hk}\eta_k}{\sum_{l\in\H}W_{hl}\eta_l},\quad \alpha_k:=\frac{x_{ik}}{\eta_k}.\]
Because the matrix $Q$ in $\mathbb{R}^{\H\times \H}_+$ with elements $Q_{hk}$ is irreducible and stochastic, all its eigenvalues are within the unit circle. Consequently, there does not exist a vector $\alpha$ in $\mathbb{R}^{\H}_+$ with elements $x_{ik}/\eta_k$ such that the above inequality is satisfied. At the population state equilibrium $\boldsymbol{y}^\ast$, the inequality \eqref{eq: fih} cannot be true for any $h$ in $\H$ and $i$ in $\A$. 
Hence, the only possibility is then $f_{ih}(\boldsymbol{x})=0$ for all $i$ in $\A$ and $h$ in $\H$, which implies Condition $2$ must hold.
\end{proof}
 
Theorem~\ref{theorem: DD} shows that the dynamic system state-density balance in~\eqref{eq: dynamic balance} is achieved even when the dynamic flow process~\eqref{eq: eta dynamic} is non-convergent, i.e. $\underset{t\rightarrow\infty}{\mathrm{lim}}\boldsymbol{\eta}(t)$ does not exist. An example of such a case is illustrated in Section~\ref{section: case study SM} and Fig.~\ref{fig: fig1}, but many more may be considered. 
An example in which \eqref{eq: eta dynamic} does converge to a density distribution is provided in Section~\ref{section: case study IFD}. In such an event, the following corollary is an immediate consequence of Theorem~\ref{theorem: DD}.
 
\begin{corollary}\label{cor: convergent eta}
If the dynamic flow process \eqref{eq: eta dynamic} is such that $\underset{t\rightarrow\infty}{\mathrm{lim}}\boldsymbol{\eta}(t)=\boldsymbol{\eta}^\ast$ and the population state satisfies $\underset{t\rightarrow\infty}{\mathrm{lim}}\boldsymbol{x}(t)\boldsymbol{1}=\boldsymbol{y}^\ast$, then the system state matrix converges to the equilibrium
\(\underset{t\rightarrow\infty}{\mathrm{lim}}{\boldsymbol{x}}(t)=\boldsymbol{y}^\ast\boldsymbol{\eta}^{\ast\top}.\)
\end{corollary}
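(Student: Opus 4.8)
The plan is to establish the matrix convergence entrywise, treating each community $h\in\H$ separately according to whether its limiting density $\eta_h^\ast$ is strictly positive or zero. Since the $(i,h)$ entry of the claimed limit satisfies $(\boldsymbol{y}^\ast\boldsymbol{\eta}^{\ast\top})_{ih}=y_i^\ast\eta_h^\ast$, it suffices to show that $\lim_{t\rightarrow\infty}x_{ih}(t)=y_i^\ast\eta_h^\ast$ for every $i\in\A$ and $h\in\H$.

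First I would dispatch the communities with $\eta_h^\ast>0$. For such $h$, the convergence $\eta_h(t)\rightarrow\eta_h^\ast>0$ immediately gives $\liminf_{t\rightarrow\infty}\eta_h(t)=\eta_h^\ast>0$, so the hypotheses of Theorem~\ref{theorem: DD} are in force, and its conclusion~\eqref{eq: dynamic balance} yields $\lim_{t\rightarrow\infty}x_{ih}(t)/\eta_h(t)=y_i^\ast$. Writing $x_{ih}(t)=\big(x_{ih}(t)/\eta_h(t)\big)\,\eta_h(t)$ and applying the product rule for limits together with $\eta_h(t)\rightarrow\eta_h^\ast$ then produces $x_{ih}(t)\rightarrow y_i^\ast\eta_h^\ast$, as desired.

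It remains to treat the degenerate communities with $\eta_h^\ast=0$, where~\eqref{eq: dynamic balance} no longer applies and the ratio argument breaks down. Here I would instead control $x_{ih}$ directly: by definition of the community density we have $\eta_h=\sum_{j\in\A}x_{jh}$ (consistent with the normalization~\eqref{eq: population density}), and since every $x_{jh}\geq 0$ this gives the pointwise sandwich $0\leq x_{ih}(t)\leq\eta_h(t)$. As $\eta_h(t)\rightarrow\eta_h^\ast=0$, the squeeze theorem forces $x_{ih}(t)\rightarrow 0=y_i^\ast\eta_h^\ast$. Combining the two cases across all $i\in\A$ and $h\in\H$ establishes $\lim_{t\rightarrow\infty}\boldsymbol{x}(t)=\boldsymbol{y}^\ast\boldsymbol{\eta}^{\ast\top}$.

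I expect no serious obstacle: the result is essentially an immediate corollary once Theorem~\ref{theorem: DD} is available, and the only point requiring care is the zero-density case, where one must avoid dividing by a vanishing $\eta_h$ and rely on the elementary bound $0\leq x_{ih}\leq\eta_h$ in place of the balance relation.
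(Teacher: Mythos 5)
Your proof is correct and is exactly the natural filling-in of the paper's claim, which presents the corollary as ``an immediate consequence of Theorem~\ref{theorem: DD}'' without further argument: the product-of-limits step for $\eta_h^\ast>0$ and the squeeze $0\leq x_{ih}\leq\eta_h$ for $\eta_h^\ast=0$ are the intended reasoning. No gaps.
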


\begin{remark}
If $W=W^\top$ and $\phi_h(\boldsymbol{x})=0$ for all $\boldsymbol{x}$ in $\mathcal{X}$ and $h$ in $\H$, then Theorem~\ref{theorem: DD} coincides with the result in \cite[Proposition 3]{Como2021} obtained for a broad class of evolutionary imitation dynamics. The proof of Theorem~\ref{theorem: DD} can be extended to this class of evolutionary imitation dynamics using analogous arguments.
\end{remark}

\subsection{Evolutionary stability}

This section focuses on the relation between the population state vector and evolutionarily stable states. For this result, we restrict our attention to binary action sets and matrix games with rewards functions of the form
\begin{equation}
r(\boldsymbol{y})=\begin{bmatrix}\label{eq: HD reward}
a&b\\c&d
\end{bmatrix}\boldsymbol{y}\,,\quad  a, b, c, d>0,   
\end{equation}
which satisfy Assumption~\ref{as: positive rewards}. These rewards can also be interpreted as the payoffs of a player in a two-by-two symmetric matrix game played against the mixed strategy $\boldsymbol{y}$~\cite{hofbauer_sigmund_1998}. 
Consider the following definition of an evolutionarily stable state.

\begin{definition}[Evolutionarily stable state~\cite{hofbauer_sigmund_1998}]\label{def: Ess}
A population state vector $\hat{\boldsymbol{y}}$ in $\mathcal{Y}$ is an evolutionarily stable state if there exists $\delta>0$ such that
$\boldsymbol{\hat{y}}\cdot A\boldsymbol{y}>\boldsymbol{y}\cdot A\boldsymbol{y}$,
for all $\boldsymbol{y}$ with $0<||\boldsymbol{y}-\hat{\boldsymbol{y}}||<\delta$. 
\end{definition}

\textcolor{black}{The following result shows the importance of evolutionarily stable states also for the replicator equation on networks with dynamic communities when the underlying interaction network is undirected.}

\begin{theorem}\label{theorem: ESS}
Consider $\Gamma$ that satisfies Assumption~\ref{as: W} and additionally assume that $W=W^\top$, the action set is binary and the game is a matrix game, as in \eqref{eq: HD reward}. Then,
\begin{enumerate}
    \item an evolutionarily stable state $\hat{\boldsymbol{y}}$ in $\mathcal{Y}$ is locally asymptotically stable; and
    \item if an evolutionarily stable state $\hat{\boldsymbol{y}}$ exists in the interior of $\mathcal{Y}$ then all interior trajectories converge to it.
\end{enumerate}
\end{theorem}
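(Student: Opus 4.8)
The plan is to collapse the matrix-valued closed-loop system~\eqref{eq: closed} onto a scalar equation for the population state, run a Lyapunov/monotonicity argument anchored at $\hat{\boldsymbol y}$, and close the loop with Theorem~\ref{theorem: DD} to identify the limit. First I would use that the flow process is closed, so that~\eqref{eq: same population equilibria} gives $\dot y_i=\sum_{h\in\H}f_{ih}(\boldsymbol x)$: migration conserves each $y_i$, hence the population state evolves through selection alone. Substituting the binary action set and $W=W^\top$ into $\sum_h f_{ih}$, using $\eta_h=x_{1h}+x_{2h}$ and the symmetry $\sum_{h,k}x_{1h}W_{hk}x_{2k}=\sum_{h,k}x_{2h}W_{hk}x_{1k}$, the sum telescopes to
\[
\dot y_1=c(\boldsymbol x)\,\big(r_1(\boldsymbol y)-r_2(\boldsymbol y)\big),\qquad c(\boldsymbol x):=\sum_{h,k\in\H}x_{1h}W_{hk}x_{2k}\ge 0,
\]
with $\dot y_2=-\dot y_1$. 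Thus $\boldsymbol y$ travels along the one-dimensional simplex $\Y$ with nonnegative speed $c(\boldsymbol x)$ and direction fixed by $\mathrm{sgn}(r_1-r_2)$, irrespective of the (possibly non-convergent) community densities.

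Next I would recast the evolutionary stability condition as a sign condition. Writing $\hat{\boldsymbol y}-\boldsymbol y=(\hat y_1-y_1)(1,-1)^\top$, Definition~\ref{def: Ess} reads $(\hat y_1-y_1)\big(r_1(\boldsymbol y)-r_2(\boldsymbol y)\big)>0$ for all $\boldsymbol y$ with $0<\|\boldsymbol y-\hat{\boldsymbol y}\|<\delta$, i.e.\ the selection term always points $y_1$ \emph{toward} $\hat y_1$. For an interior $\hat{\boldsymbol y}$ one has $r_1-r_2=\Delta\,(y_1-\hat y_1)$ with $\Delta:=a-b-c+d$, so the ESS condition is equivalent to $\Delta<0$ and then holds for \emph{all} interior $\boldsymbol y\neq\hat{\boldsymbol y}$, not merely locally; this global validity is precisely what will upgrade local stability to interior-wide convergence in claim~2.

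With these two ingredients I would take $V(\boldsymbol y)=\|\boldsymbol y-\hat{\boldsymbol y}\|^2=2(y_1-\hat y_1)^2$ and compute, along~\eqref{eq: closed},
\[
\dot V=-2\,c(\boldsymbol x)\,(\hat y_1-y_1)\big(r_1(\boldsymbol y)-r_2(\boldsymbol y)\big)\le 0
\]
on the ESS neighborhood (and on the whole interior when $\Delta<0$). This gives Lyapunov stability and shows $|y_1-\hat y_1|$ is non-increasing, so $y_1$ remains on its initial side of $\hat y_1$ and, being monotone and bounded, converges to some $y_1^\infty$; the trapping keeps $\boldsymbol y^\infty$ inside a neighborhood of $\hat{\boldsymbol y}$ that excludes the far vertex. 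I would then invoke Theorem~\ref{theorem: DD}: since $\boldsymbol y(t)\to\boldsymbol y^\infty$, the limit is a restricted Nash equilibrium, and the only such equilibrium in a sufficiently small trapping region is $\hat{\boldsymbol y}$ itself (the interior root of $r_1=r_2$ in the anti-coordination case, the vertex in the pure case). Localizing proves claim~1; the global sign condition for $\Delta<0$ proves claim~2.

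The main obstacle is that the speed $c(\boldsymbol x)$ can vanish, so $\dot V$ is only negative \emph{semidefinite} and is not bounded away from zero when $y_1\neq\hat y_1$; a naive strict-descent argument therefore fails, and the reduced equation for $\boldsymbol y$ is non-autonomous through the state-dependent factor $c(\boldsymbol x)$. I expect the resolution to be that the monotonicity of $|y_1-\hat y_1|$ requires no strict descent, while Theorem~\ref{theorem: DD}—together with the irreducibility of $W$ in Assumption~\ref{as: W}, which prevents the two actions from segregating across an occupied, connected community network—pins the limit at $\hat{\boldsymbol y}$. Excluding convergence to a boundary rest point distinct from $\hat{\boldsymbol y}$, rather than merely establishing $\dot V\le0$, is the delicate step.
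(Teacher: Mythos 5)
Your proposal is correct, and it reaches the result by a route that overlaps heavily with the paper's but closes the argument differently. Both proofs rest on the same two pillars: the reduction (via the closedness identity~\eqref{eq: same population equilibria} and $W=W^\top$) of the population dynamics to the scalar equation $\dot y_1=c(\boldsymbol x)\bigl(r_1(\boldsymbol y)-r_2(\boldsymbol y)\bigr)$ with $c(\boldsymbol x)=\boldsymbol{x}_{1'}W\boldsymbol{x}_{2'}^\top\ge 0$, and the rewriting of the ESS condition as $(\hat y_1-y_1)(r_1-r_2)=\hat{\boldsymbol y}\cdot A\boldsymbol y-\boldsymbol y\cdot A\boldsymbol y>0$ near $\hat{\boldsymbol y}$. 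Where you diverge is in the Lyapunov function and in how the degeneracy of the speed $c(\boldsymbol x)$ is handled. The paper uses the entropy-type function $P(\boldsymbol y)=\prod_i y_i^{\hat y_i}$ and disposes of the degeneracy by the sign identity $\mathrm{sgn}(\boldsymbol{x}_{1'}W\boldsymbol{x}_{2'}^\top)=\mathrm{sgn}(y_1y_2)$ (a consequence of Assumption~\ref{as: W}), so that $\dot P>0$ strictly at every interior $\boldsymbol y\neq\hat{\boldsymbol y}$ and $P$ is a strict local Lyapunov function; your quadratic $V=\|\boldsymbol y-\hat{\boldsymbol y}\|^2$ only yields $\dot V\le 0$, and you instead exploit that $|y_1-\hat y_1|$ is monotone, extract a limit $\boldsymbol y^\infty$, and invoke Theorem~\ref{theorem: DD} to certify it is a restricted Nash equilibrium, which in the ESS neighborhood (where $r_1\neq r_2$ off $\hat{\boldsymbol y}$, and which can be shrunk to exclude spurious vertices) forces $\boldsymbol y^\infty=\hat{\boldsymbol y}$. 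Your closing step buys robustness --- it does not require $c(\boldsymbol x)$ to be bounded away from zero, nor even the sign identity --- at the cost of leaning on Theorem~\ref{theorem: DD} rather than being self-contained; the paper's entropy function, by contrast, is the one that would generalize beyond binary action sets. One small point to make explicit if you write this up: Theorem~\ref{theorem: DD} delivers that the limit is a restricted Nash equilibrium, so you should note that for a $2\times 2$ matrix game the interior restricted Nash equilibria are exactly the roots of the affine function $r_1-r_2$, of which there is at most one unless the game is degenerate (in which case no ESS exists) --- this is what makes the ``only restricted NE in the trapping region'' claim airtight.
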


\begin{proof}
Consider the non-negative function 
    $P(\boldsymbol{y})=\prod_{i\in\mathcal{S}_{\hat{y}}}y_i^{\hat{y}_i}$,
which has a unique maximum at $y_i=\hat{y}_i$ for all $i$ in $\mathcal{A}$ and is a local Lyapunov function for matrix games under the single population, single community,  replicator equation~\cite{hofbauer_sigmund_1998}. 
For all $\boldsymbol{y}$ in $\mathcal{Y}$ such that $y_i>0$ if $\hat{y}_i>0$ it holds that $P(\boldsymbol{y})>0$ and for symmetric $W$ and binary action sets $\mathcal{A}=\{i,j\}$, we write
\begin{equation}
\label{eq: binary Lyapunov}
        \frac{\dot{P}(\boldsymbol{y})}{P(\boldsymbol{y})}=\sum_{i\in\A}\frac{\hat{y}_i}{y_i} \boldsymbol{x}_{i'}W\boldsymbol{x}_{j'}^\top\left( r_i(\boldsymbol{y})-r_j(\boldsymbol{y})\right).
\end{equation}
For $P(\boldsymbol{y})>0$, it also holds that $\mathrm{sgn}(\hat{y}_i/y_i)=\mathrm{sgn}(\hat{y}_i)$. Moreover, since $W$ is connected, non-negative, and has strictly positive diagonal entries, it holds that
\(\mathrm{sgn}(\boldsymbol{x}_{i'}W\boldsymbol{x}_{j'}^\top)=\mathrm{sgn}(y_iy_j).\)
Combined with~\eqref{eq: binary Lyapunov}, it follows that for $P(\boldsymbol{y})>0$ and
\begin{equation}\label{eq: sign}
\mathrm{sgn}\left(\frac{\dot{P}(\boldsymbol{y})}{P(\boldsymbol{y})}\right)=\mathrm{sgn}\left(\sum_{i\in\A}\hat{y}_iy_j\left( r_i(\boldsymbol{y})-r_j(\boldsymbol{y})\right)\right).
\end{equation}
Because $\boldsymbol{1^\top\hat{y}}=1$, the term within the sign function of the right-hand side of~\eqref{eq: sign} can be written as
$
         \sum_{i}\hat{y}_iy_j\left(\left( r_i(\boldsymbol{y})-r_j(\boldsymbol{y})\right)\right)
         =\boldsymbol{\hat{y}}\cdot A\boldsymbol{y}-\boldsymbol{y}\cdot A\boldsymbol{y}\label{eq: ess}.
$
Then, by definition~\ref{def: Ess}, there exists $\delta>0$ such that the right-hand side of \eqref{eq: binary Lyapunov} is strictly positive for all $\boldsymbol{y}:0<||\boldsymbol{y}-\hat{\boldsymbol{y}}||<\delta$, making it a strict local Lyapunov function for $\Gamma$ and the closed-loop replicator equation \eqref{eq: closed}.
Moreover, if an interior evolutionarily stable state exists, it coincides with the unique interior Nash equilibrium of the binary matrix game~\cite{cressman2003evolutionary} and all interior trajectories converge to it.
\end{proof}

\begin{remark}
The combination of Theorem~\ref{theorem: DD} and \ref{theorem: ESS} characterizes the asymptotic behavior of the system at a population and community level and shows the important role of evolutionarily stable states. In this way, it extends the famous evolutionary folk theorem~\cite{cressman2003evolutionary} to population games on (symmetric) dynamic community networks and binary action sets. It is known, however, that the folk theorem does not apply to other important evolutionary dynamics~\cite{Cressman2014} and, for dynamic communities, the extension to more than two actions is still open. To overcome these limitations, one can use the well-established framework of potential games and strictly stable games to show (global) convergence of the population state and obtain the characterization of the relation between population states, community states and dynamic community densities as before.
\end{remark}

\section{Case Studies}\label{section: case study}

\subsection{Ideal free distribution}\label{section: case study IFD}

Here, we illustrate the utility of our results for a three-community network $\mathcal{H}=\{1,2,3\}$ obeying the replicator equation combined with a dynamic flow process from Example~\ref{example one} leading to a coupled dynamic process of rational community selection and evolution of actions. We assume a complete movement matrix such that all the entries of $\Lambda$ are equal to $1$. 
Consider the community payoff function
\begin{equation}\label{eq: payoff com}
    \pi_h(\eta_h)=\alpha_h\left(1-\frac{\eta_h}{\kappa_h}\right),\quad \  \kappa_h,\alpha_h>0,~  \forall h\in\H\,,
\end{equation}
proposed in~\cite{cressman2006migration}. If the community density $\eta_h$ is below (above) its carrying capacity $\kappa_h$, individuals obtain a positive (negative) payoff scaled by the constant $\alpha_h$. 
By the definition of the environmental function~\eqref{eq: BR}, it follows that $\boldsymbol{\phi}(\boldsymbol{\eta})$ is row-stochastic. Consequently, the dynamic flow process~\eqref{eq: eta dynamic} becomes a piecewise linear system that can be written in vector form as
\begin{equation}\label{eq: piece}
    \dot{\boldsymbol{\eta}}=\boldsymbol{\phi}(\boldsymbol{\eta})^\top\boldsymbol{\eta}-\boldsymbol{\eta}.
\end{equation}
Through~\eqref{eq: closed}, this dynamic flow process can be interconnected with a binary actions set and symmetric pairwise rewards~\eqref{eq: HD reward} with $0<a<c$ and $0<d<b$ corresponding to the Hawk-Dove game~\cite{cressman2003evolutionary}. 
If the interaction matrix $W$ is symmetric, it follows from Theorem~\ref{theorem: ESS} that for every initial population state $\boldsymbol{y}(0)$ in the interior of $\mathcal{Y}$, the solution of~\eqref{eq: closed} converges to the unique interior evolutionarily stable state 
\begin{equation}\label{eq: interior ESS}
\hat{\boldsymbol{y}}^\top=\begin{bmatrix}\frac{b-d}{c-a+b-d}&\frac{a-c}{a-c+d-b}
\end{bmatrix}.
\end{equation}
Since the payoff function~\eqref{eq: payoff com} is decreasing in $\eta_h$ and some individuals always move to the community with the highest payoff due to~\eqref{eq: BR}, the solution of the dynamic flow process~\eqref{eq: piece} converges to the unique \emph{ideal free distribution}~\cite[Theorem 3]{Cressman2014} where the payoffs~\eqref{eq: payoff com} are equal at all communities. 
If the sum of the carrying capacities is equal to the population density~\eqref{eq: population density}, this simplifies to a \emph{balanced dispersal}~\cite{w2004dispersal}  
at which $\eta^\ast_h=\kappa_h$ for all $h$ in~$\H$.
It is useful to note that this process can be extended to an arbitrary---albeit finite---number of communities on a complete graph.
The influence of a less connected graph on the equilibrium distribution of $\boldsymbol{\eta}$ are not yet known, but make out an important aspect of real-world habitat selection processes~\cite{gustafson1998quantifying}.  
Corollary~\ref{cor: convergent eta} of Theorem~\ref{theorem: DD} ensures the system state $\boldsymbol{x}$ converges asymptotically to the equilibrium $\boldsymbol{x}^\ast=\hat{\boldsymbol{y}}\boldsymbol{\eta}^{\ast\top}$ at which the community state vectors are proportional to the product of the evolutionarily stable state and the ideal free distribution.
\begin{figure*}
    \centering
    \begin{subfigure}[t]{0.49\textwidth}
    \centering
    \includegraphics[width=\textwidth]{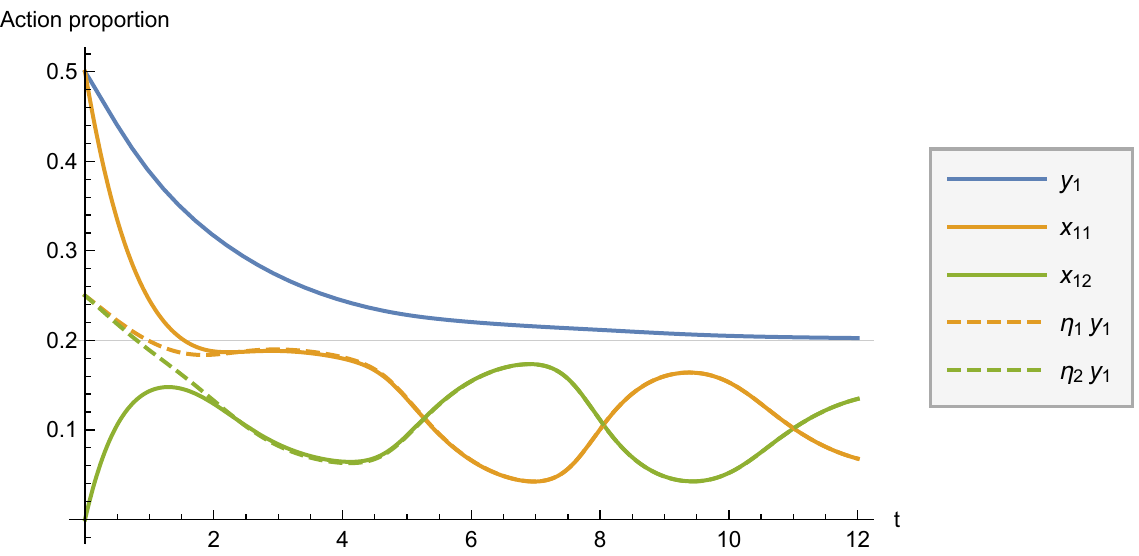}
         \caption{}\label{fig: a}
    \end{subfigure}
\hfill
    \begin{subfigure}[t]{0.49\textwidth}
    \centering
    \includegraphics[width=\textwidth]{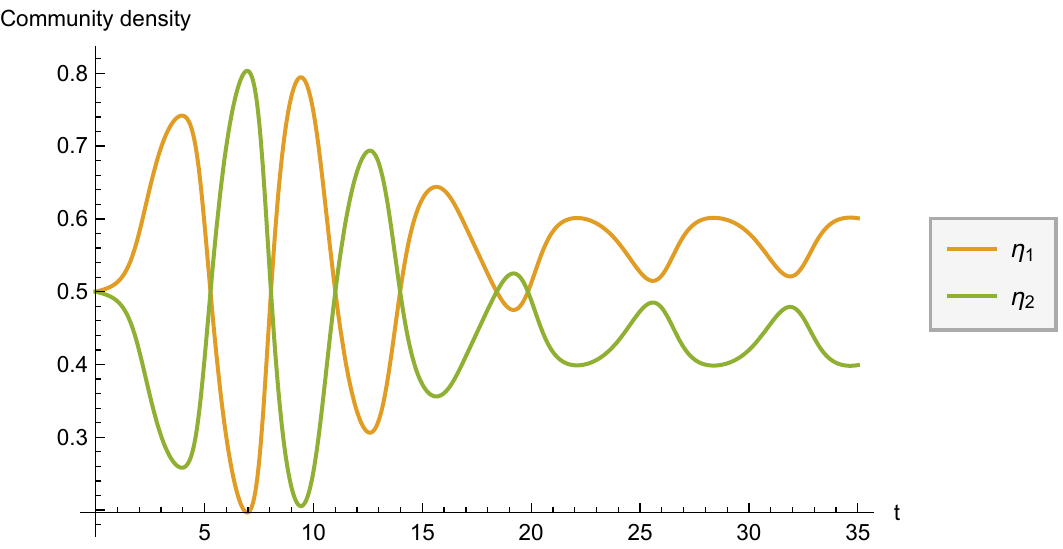}
        \caption{}\label{fig: b}
    \end{subfigure}
    \vspace{7pt}
    \caption{Numerical solutions of the dynamic flow process~\eqref{eq: eta dynamic} and the closed-loop system~\eqref{eq: closed} for two communities with the dynamic environmental function~\eqref{eq: phi dot} with sinusoidal varying carrying capacities~\eqref{eq: sinus}. 
    Notice in (b) that the community densities oscillate periodically from $t=22$, while the population state in (a) converges to an equilibrium and a dynamic balance system state is achieved asymptotically. 
   Parameter values: $\gamma=0.25$, $\rho=0.5$, $a=1$, $b=7$, $c=5$, $d=6$, $W_{11}=0.7$, $W_{12}=W_{21}=W_{22}=0.3$, $\lambda_{aa}=\lambda_{bb}=1$, $\lambda_{ba}=0.8$, $\lambda_{ab}=0.5$ with $\phi_{12}(0)=\phi_{21}(0)=0.05$.}\label{fig: fig1}
\end{figure*}

\subsection{Seasonal migration}\label{section: case study SM}

The ideal-free distribution case study is an example in which both the evolutionary and dynamic flow process are convergent. Theorem~\ref{theorem: DD}, however, only requires the population state vector to converge to an equilibrium ---not the system state and community densities. In fact, in reality, they often exhibit periodic behaviors under the influence of geographically distributed seasonal changes or socio-economic cyclic phases.  
We illustrate this for two communities with a sinusoidally varying carrying capacity, from~\cite{banks1993growth}:
\begin{equation}\label{eq: sinus}
    \kappa_1(t)=\gamma\ \mathrm{sin}(t)+\rho\,,\quad
    \kappa_2(t)=\gamma\ \mathrm{sin}(t+\pi)+\rho\,,
\end{equation}
with $0<\gamma<\rho$ to ensure they are positive. 
The phase shift between the two communities represents a geographical difference in seasonal changes or socio-economic parameters.
The sinusoidal varying carrying capacities is combined with the dynamic environmental function~\eqref{eq: phi dot}.
A closed system of ODEs is obtained with the dynamic flow process~\eqref{eq: eta dynamic} and the evolutionary dynamic~\eqref{eq: closed} with rewards~\eqref{eq: HD reward}.

Even for just two communities, a full analysis is challenging. 
However, with the theory developed in this letter, some critical insights at both the population and community level can be obtained. 
As Theorem~\ref{theorem: ESS} predicts, the population state converges asymptotically to the evolutionarily stable state \eqref{eq: interior ESS} indicated by the horizontal line in Fig.~\ref{fig: a}. 
A non-trivial behavior occurs at a community level, whereby persistent oscillations due to the sinusoidally varying carrying capacities emerge, as reported in Fig.~\ref{fig: b}. 
The effect of Theorem~\ref{theorem: DD} is then seen by the trajectories that converge to each other in Fig.~\ref{fig: a}: the proportion of players in the communities converges asymptotically to the product of the oscillating community densities and the evolutionarily stable state.

\section{Conclusion}\label{sec:conclusion}

In this letter, we have proposed a novel mathematical framework for evolutionary game-theoretic learning processes on dynamic community networks. Specifically, our framework couples a replicator equation on a community network with a closed dynamic flow process and an environmental feedback, co-evolving at comparable time scales. 
Under reasonable assumptions on the structure of the networks and reward functions, we have provided a characterization of the equilibria of the dynamical system. Moreover, for matrix games on symmetric networks, we have established convergence to the evolutionary stable states.

Our framework paves the way for promising avenues of future research. First, while our analysis focuses on closed flow processes, the framework could be expanded to deal with open flow processes and to incorporate noise and uncertainties. 
Second, the two processes can be further intertwined by considering density-dependent rewards, as well as action-dependent migration rates. 
Third, the stability of equilibrium points should be further investigated toward fully characterizing the asymptotic behavior of the system, with specific focus on the role of symmetry of the interaction and movement matrices.








\bibliographystyle{IEEEtran}
\bibliography{bib,IEEEabrv}


\end{document}